\documentclass[conference]{IEEEtran}

\usepackage{tikz}
\usepackage{subcaption}
\usepackage{amsmath,amssymb}
\usepackage{algorithm}
\usepackage{algpseudocode}
\usepackage{graphicx}
\usepackage{booktabs}
\usepackage{url}
\usepackage{xcolor}
\usepackage{hyperref}
\usepackage[table]{xcolor}
\definecolor{ourrow}{gray}{0.94}
\usepackage{amsthm}
\newtheorem{lemma}{Lemma}
\usepackage{booktabs}
\usepackage{textcomp}
\usepackage{mdframed}

\renewcommand{\footnoterule}{%
  \kern -3pt                        
  \hrule width 0.4\columnwidth      
  \kern 2.6pt                       
}
\newcommand{\blackcircledwhite}[1]{%
  \tikz[baseline=(char.base)]{
    \node[shape=circle, fill=black, text=white, inner sep=1pt] (char) {\bfseries #1};
  }%
}

\title{Trust-Aware Routing for Distributed Generative AI Inference at the Edge}

\author{
\IEEEauthorblockN{Chanh Nguyen, Erik Elmroth}
\IEEEauthorblockA{
Department of Computing Science, Ume{\aa} University, SE-90187, Sweden\\
Email: \{chanh, elmroth\}@cs.umu.se
}
}

\begin{document}
\maketitle

\thispagestyle{plain} 
\pagestyle{plain}

\begin{abstract}
Emerging deployments of Generative AI increasingly execute inference across decentralized and heterogeneous edge devices rather than on a single trusted server. In such environments, a single device failure or misbehavior can disrupt the entire inference process, making traditional best-effort peer-to-peer routing insufficient. Coordinating distributed generative inference therefore requires mechanisms that explicitly account for reliability, performance variability, and trust among participating peers.

In this paper, we present G-TRAC, a trust-aware coordination framework that integrates algorithmic path selection with system-level protocol design to ensure robust distributed inference.
First, we formulate the routing problem as a \textit{Risk-Bounded Shortest Path} computation and introduce a polynomial-time solution that combines trust-floor pruning with Dijkstra's search, achieving sub-millisecond median routing latency at practical edge scales, and remaining below 10 ms at larger scales.
Second, to operationally support the routing logic in dynamic environments, the framework employs a \textit{Hybrid Trust Architecture} that maintains global reputation state at stable anchors while disseminating lightweight updates to edge peers via background synchronization.

Experimental evaluation on a heterogeneous testbed of commodity devices demonstrates that G-TRAC significantly improves inference completion rates, effectively isolates unreliable peers, and sustains robust execution even under node failures and network partitions.

\end{abstract}

\begin{IEEEkeywords}
Edge Computing, Edge Intelligence, Distributed LLM Inference, Risk-Bounded Systems, Trust-Aware Routing, Pipeline Parallelism.
\end{IEEEkeywords}

\section{Introduction}
\label{sec:intro}
\begin{figure*}[t]
\vspace{0.08in}
    \centering
    \begin{subfigure}[t]{0.56\textwidth}
        \centering
        \includegraphics[width=\linewidth]{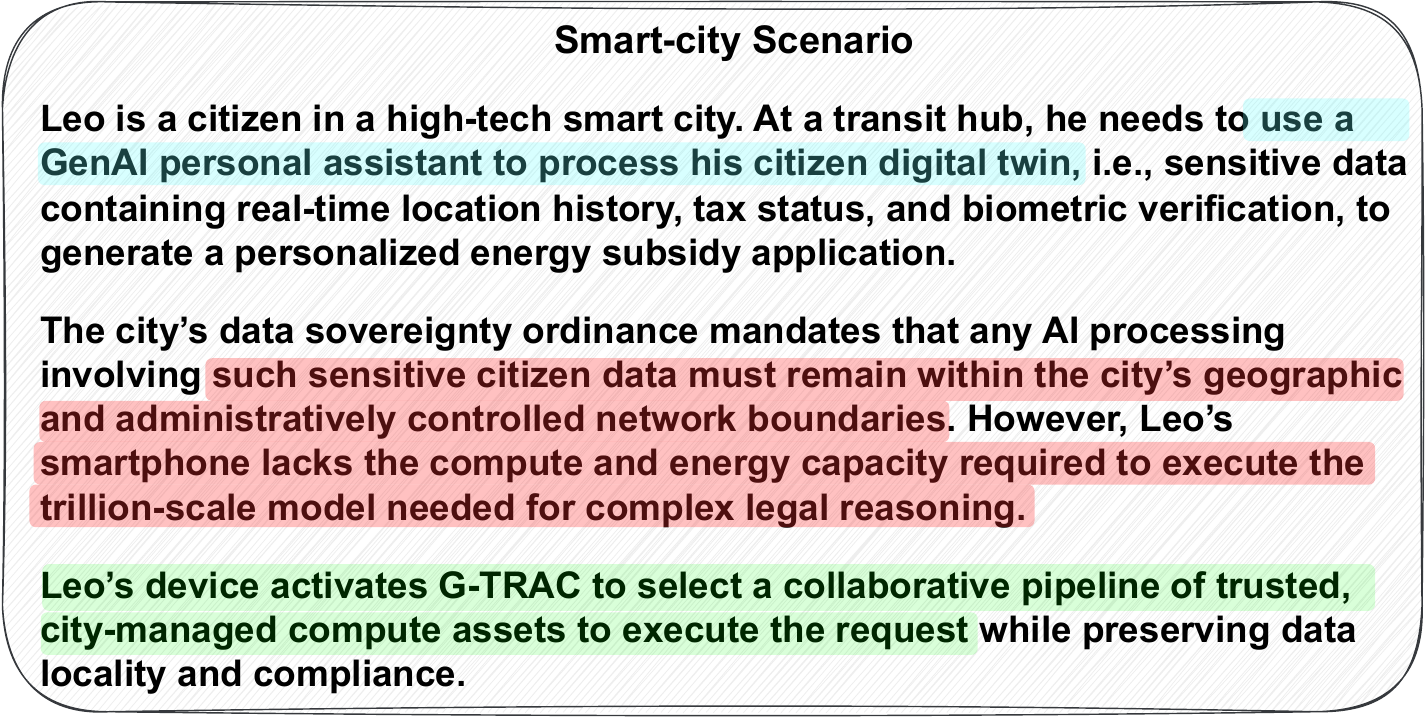}
        \caption{Motivating smart-city scenario with policy-constrained collaborative edge AI execution}

        \label{fig:subfig1}
    \end{subfigure}
    \hfill
    \begin{subfigure}[t]{0.35\textwidth}
        \centering
        \includegraphics[width=\linewidth]{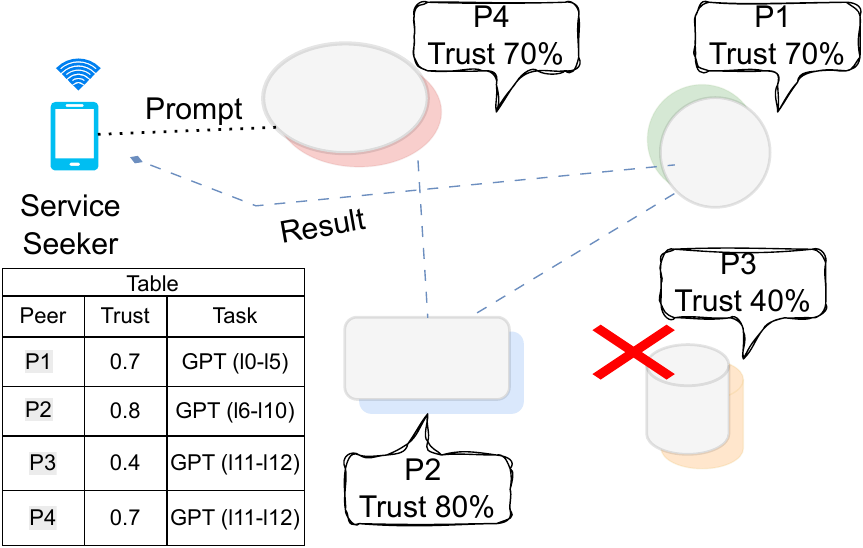}
        \caption{Trust-aware GenAI execution from the service seeker perspective using GTRAC}
        \label{fig:subfig2}
    \end{subfigure}
    \caption{Overview of a smart-city edge scenario and the role of G-TRAC in coordinating trust-aware, distributed GenAI execution.}

    \label{fig:scenario_tarp}
\end{figure*}

Generative artificial intelligence (GenAI) has recently emerged as a transformative class of models capable of producing human-like content (e.g., text, image, code, and video) from simple textual prompts \cite{cao2025survey, naveed2025comprehensive}. In contrast to traditional AI systems that primarily perform classification or prediction through a single forward pass \cite{lecun2015deep}, GenAI synthesizes new content through large-scale, autoregressive inference \cite{radford2018improving, vaswani2017attention}. 
These systems frequently utilize \textit{trillion-scale architectures} and \textit{Mixture-of-Experts} designs \cite{fedus2022switch, cai2025survey}, introducing significantly higher computational, memory, and latency demands due to the iterative and sequential nature of token generation.
To meet these demands, large-scale GenAI models are typically deployed on high-performance server clusters equipped with specialized AI accelerators and ultra-high-speed interconnects \cite{naveed2025comprehensive, aminabadi2022deepspeed, kwon2023efficient}.

Reliance on such centralized data centers exposes GenAI services to network disruptions, service outages, and increased \textit{privacy} and \textit{data-governance} complexity~\cite{hacker2023regulating}, particularly regarding sovereign data requirements, cross-jurisdictional handling~\cite{leboukh2023balancing, ye2024privacy}, and the risks inherent in large-scale data aggregation. These limitations have motivated increasing interest in \textit{edge-based GenAI deployment}, which promises lower network-induced latency, improved responsiveness, and greater resilience to disconnection. However, this promise is fundamentally constrained by the nature of edge environments themselves, i.e., edge devices are highly heterogeneous and severely resource-limited, making it infeasible for any single node (e.g., smartphones, IoT gateways, local edge servers) to host or execute large generative models in isolation.

To partially address this mismatch, Small Language Models (typically ranging from hundreds of millions to a few billion parameters)~\cite{wang2025comprehensive, magister2023teaching} have been developed for use on resource-constrained devices~\cite{qu2025mobile}. Nevertheless, a substantial performance gap remains: reducing model scale often degrades complex reasoning ability, factual accuracy~\cite{wei2022emergent}, and robustness against hallucinations~\cite{ji2023survey, kulkarni2025evaluating}. 

Although executing high-capacity models end-to-end remains infeasible on edge devices, recent advances in model partitioning and pipelined inference~\cite{ong2024efficient, ma2023poster, huang2019gpipe} enable cooperative execution of generative workloads across multiple resource-constrained nodes. In such settings, the central challenge extends beyond the technical partitioning of inference computation to ensuring robustness and reliability in the presence of stragglers and unreliable peers (e.g., nodes may become overloaded, disconnect without warning, or return incorrect or low-quality outputs). 
Existing peer-to-peer (P2P) frameworks~\cite{borzunov2023petals, gao2023gradientcoin, p2pllmedge} address these challenges only in a\textit{ best-effort} manner, i.e., decentralized inference execution without explicit latency or reliability guarantees, and lack a fundamental notion of computational trust that accounts for the behavioral history or observed reliability of autonomous edge participants.

We argue that, in decentralized edge environments, \textit{routing becomes a first-class control problem for generative inference workloads}, as execution must be coordinated across heterogeneous and unreliable peers rather than relegated to passive data transport~\cite{morabito2025smaller}. Unlike conventional networking, where packets follow stateless forwarding paths, GenAI inference routing must select execution paths under uncertainty, balancing latency, reliability, and dynamic availability. This requirement formulates the routing task as a Risk-Bounded Shortest Path (RBSP) problem, uniquely complicated by the iterative, token-based nature of the generation process.


To address these challenges, we present \textbf{G-TRAC} (\textbf{G}enerative \textbf{T}rust-Aware \textbf{R}outing and \textbf{A}daptive \textbf{C}haining), a coordination framework for distributed generative inference over decentralized edge networks. G-TRAC treats inference routing as a multi-objective control problem and dynamically selects execution paths across heterogeneous peers using lightweight trust and performance estimates. 

Our core contributions are as follows:

\begin{itemize}
    \item \textbf{Hybrid Trust Architecture:} 
     A coordination architecture that maintains globally consistent trust and reputation state at stable infrastructure anchors, while disseminating updates to resource-constrained edge participants via \textit{periodic background synchronization}, thereby decoupling control-plane latency from the inference critical path.

   \item \textbf{Efficient Risk-Bounded Routing:}
    A polynomial-time routing algorithm that solves the Risk-Bounded Shortest Path problem on a trusted subgraph. By combining trust-floor pruning with Dijkstra's algorithm, G-TRAC identifies optimal execution paths in sub-millisecond time, avoiding the combinatorial explosion typical of standard multi-constraint routing.

    \item \textbf{Physical Testbed Evaluation:}
    An end-to-end experimental evaluation on a heterogeneous edge testbed comprising commodity devices, demonstrating that G-TRAC improves inference completion rate, isolates underperforming or unreliable peers, and maintains robust GenAI service execution even under node failures and network partitions.
\end{itemize}

It is worth noting that our contributions focus on improving the reliability, robustness, and quality-of-experience of decentralized GenAI execution rather than on enhancing the semantic accuracy or reasoning capability of the underlying language models.
For clarity, Figure~\ref{fig:scenario_tarp} illustrates a representative edge scenario and the role of G-TRAC in coordinating trust-aware GenAI execution across heterogeneous peers.

\section{Related Work}

\begin{table*}[t]
\centering
\caption{Comparison of decentralized and cloud--edge GenAI inference systems}
\label{tab:related_work_comparison}
\renewcommand{\arraystretch}{1.3}
\begin{tabular}{|p{2.2cm}|p{3.8cm}|p{3.5cm}|p{3.0cm}|p{3.0cm}|}
\hline
\textbf{Work} &
\textbf{Primary Goal} &
\textbf{Routing / Coordination Logic} &
\textbf{Trust / Reliability Model} &
\textbf{Target Environment} \\
\hline
Petals (ACL’23)~\cite{borzunov2023petals} &
Democratize access to large models &
DHT-based discovery with latency-prioritized chain selection &
Best-effort (cooperative assumption)&
Consumer GPUs \\
\hline

Parallax~\cite{tong2025parallax} &
Maximize throughput under dynamic availability &
DAG-based scheduling with pipeline parallelism &
Best-effort (performance variability only) &
Decentralized GPU pools \\
\hline

DeServe~\cite{wu2025deserve} &
Affordable offline inference (batching) &
Cost-aware workload distribution &
Best-effort (cost-driven cooperation) &
Underutilized, preemptible GPUs in shared data centers
 \\
\hline

Helix (ASPLOS'25)~\cite{mei2025helix} &
Maximize throughput in heterogeneous clusters &
Max-flow / min-cost optimization  &
Managed infrastructure &
Heterogeneous GPU Clusters \\
\hline

HexGen (ICML'24)~\cite{hexgen} and HexGen-2 (ICLR'25)~\cite{jiang2025hexgen} &
Optimize disaggregated prefill/ decoding phases &
Graph partitioning \& Max-Flow scheduling &
Managed infrastructure &
Heterogeneous GPU Clusters \\
\hline

$\text{MoE}^2$ (MobiCom'25)~\cite{jin2025poster} &
Energy-accuracy expert optimization &
Expert ensemble selection&
Trust-agnostic optimization&
Edge Servers \\
\hline

Yang et. al. (TMC 2025)~\cite{yang2025quality} &
Optimize single-node expert selection &
Deep reinforcement learning &
Trust-agnostic optimization &
Mobile-to-Edge Offloading \\
\hline

Yu et. al.~\cite{yu2025efficient} &
Balance response time, cost, and quality &
NSGA-II Multi-objective genetic algorithm &
Trust-agnostic optimization &
Cloud--Edge Continuum \\
\hline

\rowcolor{ourrow}
\textbf{G-TRAC (Ours)} &
\textbf{Reliable, Risk-bounded} &
\textbf{Risk-bounded shortest path} &
\textbf{Explicit trust and risk modeling} &
\textbf{Heterogeneous IoT \& Mobile Edge} \\
\hline

\end{tabular}
\end{table*}

To enable large-model inference on hardware with limited memory, systems such as Petals~\cite{borzunov2023petals} adapt pipeline parallelism to public networks. These approaches overcome single-device resource constraints through techniques such as quantization and dynamic block loading to maximize peer participation. Recent advances such as Parallax~\cite{tong2025parallax} improve upon these baselines by introducing a two-phase scheduler that jointly optimizes model placement and runtime chain selection. Similarly, DeServe~\cite{wu2025deserve} leverages decentralized GPUs to reduce the cost of large-scale inference for offline workloads. However, these systems primarily optimize throughput and connectivity, rather than execution trust. They typically rely on greedy latency heuristics or global DAG planning that assumes cooperative peers. In adversarial or highly dynamic edge settings, where peers may be volatile or non-cooperative, such best-effort mechanisms leave the inference pipeline vulnerable to failures that are fatal for interactive workloads.

In contrast to volunteer networks, systems such as Helix~\cite{mei2025helix}, HexGen~\cite{hexgen}, and HexGen-2~\cite{jiang2025hexgen} focus on maximizing performance within controlled, heterogeneous clusters. HexGen enables asymmetric tensor and pipeline parallelism across diverse GPU generations, while HexGen-2 further optimizes execution through disaggregated prefill–decode scheduling. Helix formulates LLM serving as a max-flow optimization problem to maximize throughput. These systems implicitly assume a managed infrastructure with high node availability and trusted execution. Their reliance on heavyweight centralized orchestration (e.g., MILP solvers in Helix) makes them unsuitable for decentralized edge environments, where global state is difficult to maintain and peers operate autonomously.

Recent work has also explored QoS-aware routing of inference requests under quality, cost, or energy constraints. Efficient Routing~\cite{yu2025efficient} uses genetic algorithms (NSGA-II) to route requests to an optimal cloud or edge instance, while $\text{MoE}^2$~\cite{jin2025poster} employs a gating network to select suitable expert nodes under latency and energy budgets. Decentralized LLM Deployment~\cite{yang2025quality} reduces inference latency in mobile edge computing via parallel communication–computation protocols. While effective for selecting a single execution endpoint or optimizing local parallelism, these approaches do not address multi-hop routing risk. Generative inference involves a sequentially dependent execution chain, so selecting a \textit{good} node is insufficient if the path to or through it is untrusted.

G-TRAC differs fundamentally by treating generative inference routing as a risk-bounded shortest path problem, where routing decisions are constrained by empirically measured peer reliability. This enables robust completion of distributed inference pipelines even in decentralized, unreliable edge environments.

We summarize these state-of-the-art systems, along with our approach, in Table~\ref{tab:related_work_comparison}, highlighting differences in routing strategies, trust assumptions, and deployment environments.

\section{System Model and Problem Formulation}
\label{sec:system_model}
\subsection{Network and Entity Model}
We consider a decentralized edge computing environment represented by a directed overlay graph $\mathcal{G}=(\mathcal{V},\mathcal{E})$.
The node set $\mathcal{V}=\{A\}\cup\mathcal{S}\cup\mathcal{P}$ comprises three classes of entities:
\begin{itemize}
    \item \textbf{Anchor ($A$):} A stable, high-availability infrastructure node that maintains a global trust and reputation ledger. The Anchor serves solely as a control-plane coordinator and does not execute inference tasks or lie on the data path.

    \item \textbf{Compute Peers ($\mathcal{P}$):} A set of heterogeneous edge devices that contribute computational resources.
    Each peer $p_i \in \mathcal{P}$ is characterized by a dynamic trust score $r_i(t) \in [0,1]$ and an estimated execution latency $\hat{\ell}_i(t)$.
    Depending on the deployment model, a peer advertises either
    (i) a contiguous layer segment $[L^{\text{start}}_i, L^{\text{end}}_i]$ of a sharded model, or
    (ii) a specific pipeline stage $s_i$ in a functional inference pipeline.

    \item \textbf{Service Seekers ($\mathcal{S}$):} Resource-constrained edge devices (e.g., smartphones) that initiate GenAI inference requests but lack sufficient memory or compute capacity to execute the full workload locally.
\end{itemize}

The set of edges $\mathcal{E}$ represents feasible logical handovers between participating entities.
Specifically, a directed edge $(p_i, p_j) \in \mathcal{E}$ exists between two peers if and only if peer $p_i$ hosts model stage $k$ and peer $p_j$ hosts the subsequent stage $k+1$, enabling a valid sequential transfer.
Peers maintain liveness via periodic heartbeats to the Anchor, denoted by the indicator $a_p(t)\in\{0,1\}$.

\subsection{Workload Model: Generative Service Chains}
GenAI inference exhibits an inherent sequential dependency due to autoregressive decoding~\cite{vaswani2017attention}, where the generation of token $t$ strictly depends on token $t-1$.
In datacenter environments, systems mitigate this latency by exploiting intra-layer and pipeline parallelism across micro-batches~\cite{huang2019gpipe, jiang2024dynapipe}.
However, such dense parallelism is often infeasible in resource-constrained edge environments due to bandwidth and memory limitations~\cite{habibi2025edge}.
Consequently, edge-based inference naturally degrades into a distributed sequential process.

Accordingly, we model a GenAI inference request as a \textit{sequential service chain}
$\mathcal{T}=\langle \tau_1,\tau_2,\ldots,\tau_K\rangle$, where each stage $\tau_k$ represents a computation step with a strict execution dependency.
The output of $\tau_k$ serves as the input to $\tau_{k+1}$.
A request is executed along a physical path $\pi=\langle p^{(1)},\ldots,p^{(K)}\rangle$, where $p^{(k)} \in \mathcal{P}$ denotes the peer assigned to stage $k$.
At stage $k$, peer $p^{(k)}$ receives an intermediate state $x_{k-1}$ and computes $x_k = f_{p^{(k)}}(x_{k-1})$.

This abstraction unifies multiple GenAI deployment paradigms:
\begin{itemize}
    \item \textbf{Layer-Sharded Inference:} $x_k$ represents a hidden state (e.g., activation tensor) propagated between contiguous model layers hosted on different devices.
    
    \item \textbf{Agentic and Functional Pipelines:} $x_k$ represents a semantic artifact
    (e.g., a retrieval result, generated code fragment, or verification signal)
    passed between distinct models or tools.
\end{itemize}

The proposed routing and trust mechanisms operate solely on this sequential composition and are agnostic to the internal semantics of $x_k$. Without loss of generality, our subsequent analysis and evaluation focus on the layer-sharded inference model, as it presents the most stringent latency and bandwidth constraints.

\subsection{Risk and Reputation Model}
For each live peer $p$, the Anchor maintains the trust score $r_p(t)\in[0,1]$, updated based on the peer's historical performance. 
While peer failures may exhibit correlation due to shared infrastructure or load conditions, we assume that peer failures are conditionally independent given their trust scores to obtain a tractable baseline model.
Under this assumption, the reliability of a service chain $\pi$ is given by
\begin{equation}
\label{eq:rel_chain}
\mathrm{Rel}(\pi;t)=\prod_{p\in\pi} r_p(t)
\end{equation}

Accordingly, the risk of a service chain $\pi$ is defined as
\begin{equation}
\mathrm{Risk}(\pi;t)=1-\mathrm{Rel}(\pi;t)
\end{equation}

To guarantee service quality, any valid chain $\pi$ must satisfy $\mathrm{Risk}(\pi;t) \leq \epsilon$, where $\epsilon$ denotes the user-defined risk tolerance.

\subsection{Risk-Bounded Chain Selection Problem}
For each peer $p$, the Anchor observes the per-hop execution time $\ell^{\mathrm{obs}}_p(t)$ from completed inference requests, which captures the end-to-end delay incurred at peer $p$ (i.e., local computation, serialization, network transmission, and forwarding overhead).
The Anchor maintains a smoothed latency estimate using an Exponentially Weighted Moving Average (EWMA)~\cite{hunter1986exponentially}:

\begin{equation}
\hat{\ell}_p(t)
=
(1-\beta)\hat{\ell}_p(t-1)
+
\beta\,\ell^{\mathrm{obs}}_p(t)
\end{equation}
where $\beta\in(0,1)$ is the smoothing factor.

In volatile edge environments, minimizing raw execution time is insufficient if the selected path frequently fails. A fast peer with low reliability introduces significant tail latency due to timeouts and repair overhead. To capture this trade-off, we define the \textit{effective latency cost} $\mathcal{C}_p(t)$ as the expected time to complete a hop, explicitly accounting for failure probability:
\begin{equation}
\label{eq:effective_lat}
\mathcal{C}_p(t) = \hat{\ell}_p(t) + \big(1 - r_p(t)\big) \cdot  T_{\text{timeout}}
\end{equation}
where $T_{\text{timeout}}$ is a fixed system penalty reflecting failure detection and re-routing delay.
Accordingly, $\mathcal{C}_p(t)$ penalizes unreliable nodes, aligning the routing objective with the system's tail-latency goals.

Let $\Pi(t)$ denote the set of all live, contiguous service chains at time $t$. Given a user-defined risk tolerance $\epsilon\in(0,1)$, we formulate the \textit{Risk-Bounded Shortest Path} problem as finding the optimal chain $\pi^\star$ that minimizes the total effective latency while satisfying the global safety constraint:

\begin{equation}
\label{eq:rbsp_final}
\begin{aligned}
\pi^\star(t) = \arg\min_{\pi\in\Pi(t)}\quad & \sum_{p\in\pi} \mathcal{C}_p(t) \\
\text{s.t.}\quad & \prod_{p\in\pi} r_p(t) \ge 1 - \epsilon, \\
& a_p(t)=1 \quad \forall p\in\pi
\end{aligned}
\end{equation}

The optimization in Eq.~\eqref{eq:rbsp_final} combines an additive cost function with a multiplicative path constraint, a formulation corresponding to the classic \textit{Restricted Shortest Path} problem known to be NP-Hard~\cite{Garey_np}.
To address this computational intractability, Section~\ref{sec:gtrac_design} proposes a topological decomposition approach that enforces a strict local trust floor, reducing the problem to a polynomial-time shortest-path search on a trusted subgraph.

\section{G-TRAC Design and Algorithm}
\label{sec:gtrac_design}
In this section, we present G-TRAC, an efficient routing algorithm that solves the risk-bounded chain selection problem defined in Eq.~\eqref{eq:rbsp_final}. To operate under edge constraints and dynamic peer availability, G-TRAC employs a decentralized control loop that separates global state estimation at the Anchor from lightweight, seeker-side routing decisions.
Figure~\ref{fig:g-trac} illustrates the architecture and execution pipeline. The algorithm proceeds in three phases:
\blackcircledwhite{1} \textit{State Synchronization},
\blackcircledwhite{2} \textit{Constraint Pruning}, and
\blackcircledwhite{3} \textit{Feedback-Driven Execution}.

\begin{figure}[t]
\vspace{0.05in}
  \centering
  \includegraphics[width=\columnwidth]{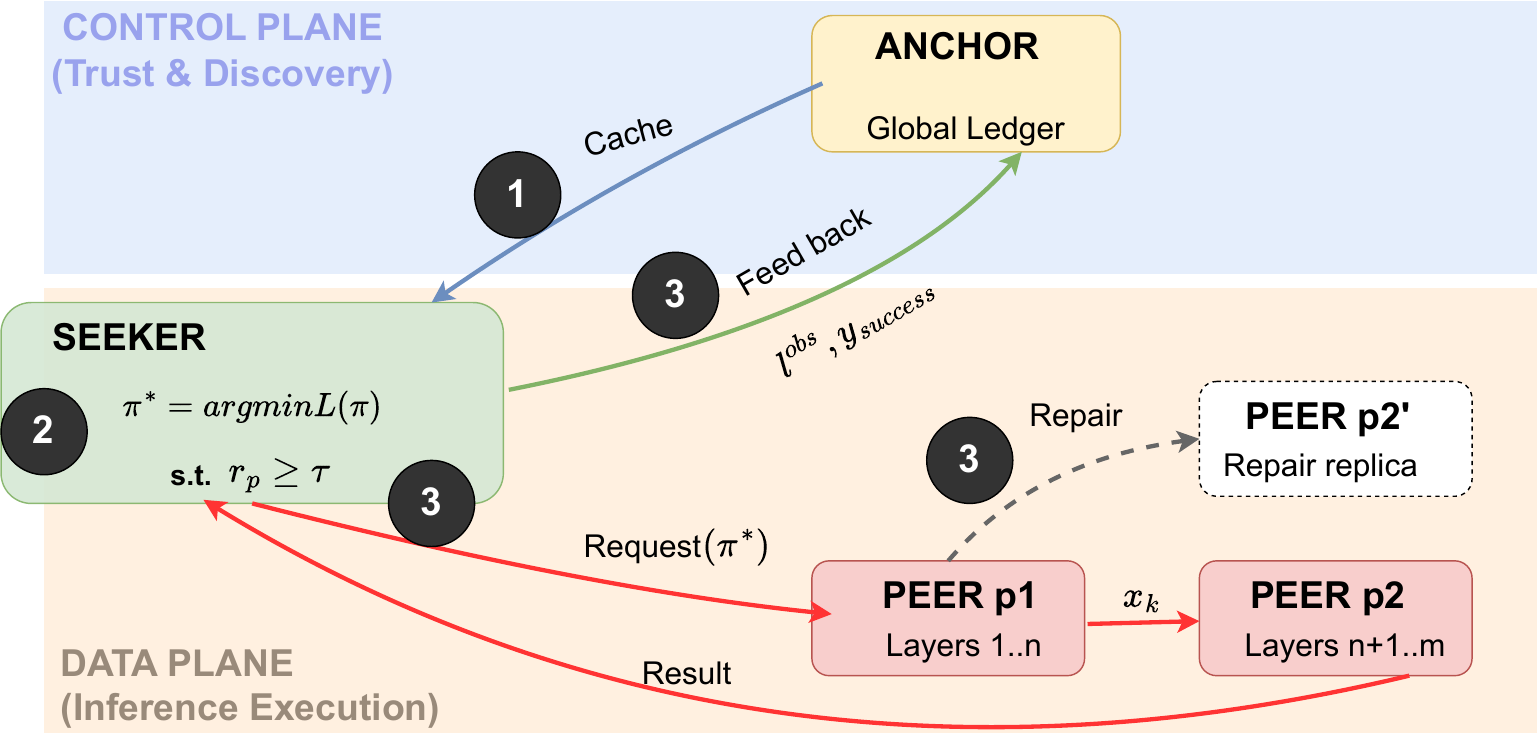}
  \caption{G-TRAC architecture and execution flow.}
  \label{fig:g-trac}
\end{figure}

\begin{algorithm}

\small
\caption{G-TRAC Seeker-Side Routing and Execution}
\label{alg:gtrac_opt}
\begin{algorithmic}[1]

\Require 
Token/layer sequence $\mathcal{T}$, trust threshold $\tau$, cached peer state $\tilde{\Sigma}$
\Ensure 
Execution outcome (\textsc{Success} or \textsc{Failure})

\Statex \textbf{//Trust-Based Pruning}
\State $V' \gets \{\, p \in \tilde{\Sigma} \mid a_p(t)=1 \land r_p(t) \ge \tau \,\}$

\Statex \textbf{//Latency-Optimal Path Search}
\State $G' \gets \textsc{BuildDAG}(V')$
\Comment{Construct DAG over contiguous layers}
\State $\pi^\star \gets \textsc{Dijkstra}(G', \text{weight}=\mathcal{C}_p)$
\Comment{Compute minimum-latency chain}
\If{$\pi^\star = \emptyset$}
    \State \Return \textsc{Abort}
    \Comment{No feasible contiguous chain}
\EndIf

\Statex \textbf{//Execution and Local Repair}
\State $\pi_{\textit{exec}} \gets \pi^\star$
\State $(\textit{res}, p_{\textit{fail}}) \gets \textsc{ChainExec}(\pi_{\textit{exec}})$

\If{$\textit{res} = \textsc{Failure}$ \textbf{and} \textsc{RepairEnabled}}
    \State $p_{\textit{new}} \gets 
    \arg\min\limits_{p \in V'} 
    \{ \hat{L}_p \mid p \neq p_{\textit{fail}} \land 
    \textsc{Layers}(p)=\textsc{Layers}(p_{\textit{fail}}) \}$
    
    \If{$p_{\textit{new}} \neq \emptyset$}
        \State $\pi_{\textit{exec}} \gets 
        \textsc{SwapNode}(\pi^\star, p_{\textit{fail}}, p_{\textit{new}})$
        \State $(\textit{res}, \_) \gets 
        \textsc{ChainExec}(\pi_{\textit{exec}})$
        \Comment{Retry with repaired chain}
    \EndIf
\EndIf
\Statex \textbf{//Feedback}
\State \textsc{UpdateTrust}(\textit{res}, $p_{\textit{fail}}$)
\State \Return \textit{res}

\end{algorithmic}
\end{algorithm}

\subsection{Phase 1: Asynchronous State Synchronization}
The Anchor acts as the control-plane authority, maintaining the global registry $\Sigma_t = \{ (p, \mathbf{c}_p, r_p, \hat{\ell}_p) \}_{p \in \mathcal{P}}$.
To avoid introducing control-plane latency into the inference critical path, Seekers do not query the Anchor synchronously per request.
Instead, they maintain a locally cached view $\tilde{\Sigma}_t \subseteq \Sigma_t$,
updated through periodic background synchronization.
As a result, routing decisions are computed locally without blocking on control-plane communication.

\subsection{Phase 2: Candidate Pruning via Trust Floors}

Directly enforcing the end-to-end constraint $\prod_{p\in\pi} r_p \ge 1-\epsilon$ typically requires evaluating complete paths, which is computationally expensive.
To make the search tractable, G-TRAC relaxes the global constraint into a local sufficiency condition, filtering the peer graph \textit{before} path selection.
We introduce a configurable \textit{trust floor} $\tau \in (0,1)$.

\begin{lemma}[Sufficiency of Local Trust]
\label{lem:trust}
For a service chain $\pi$ of length $K$, if every participating peer $p \in \pi$ satisfies $r_p(t) \ge \tau$, then the end-to-end failure risk is bounded by:
\begin{equation}
\mathrm{Risk}(\pi;t) \le 1 - \tau^K
\end{equation}
\end{lemma}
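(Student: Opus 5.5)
The argument is a direct monotonicity bound applied to the product formula for chain reliability in Eq.~\eqref{eq:rel_chain}. Recall that under the conditional-independence assumption, $\mathrm{Rel}(\pi;t)=\prod_{p\in\pi} r_p(t)$ and $\mathrm{Risk}(\pi;t)=1-\mathrm{Rel}(\pi;t)$. Since $\pi$ has length $K$, the product has exactly $K$ factors, one per stage $p^{(1)},\ldots,p^{(K)}$. Each factor lies in $[0,1]$ and, by hypothesis, is at least $\tau$.

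The key step is to show that $\prod_{k=1}^{K} r_{p^{(k)}}(t) \ge \tau^K$. I would prove this by induction on $K$.

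\textbf{Base case.} For $K=1$, the claim is exactly the hypothesis $r_{p^{(1)}}(t)\ge\tau$.

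\textbf{Inductive step.} Suppose $\prod_{k=1}^{K-1} r_{p^{(k)}}(t)\ge \tau^{K-1}$. Multiplying inequalities between nonnegative quantities preserves order: if $a\ge b\ge 0$ and $c\ge d\ge 0$, then $ac\ge bd$, since $ac \ge bc \ge bd$. Here $\tau\in(0,1)$ and every $r_p(t)\ge 0$, so all four quantities are nonnegative. Hence
\[
\prod_{k=1}^{K} r_{p^{(k)}}(t) = \Big(\prod_{k=1}^{K-1} r_{p^{(k)}}(t)\Big)\, r_{p^{(K)}}(t) \ge \tau^{K-1}\cdot\tau = \tau^K.
\]

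Negating and adding $1$ to both sides gives
\[
\mathrm{Risk}(\pi;t) = 1-\mathrm{Rel}(\pi;t) \le 1-\tau^K,
\]
which is the claim.

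The only point needing care is nonnegativity, because multiplying inequalities can fail with negative terms. Nonnegativity is guaranteed here since every $r_p(t)\in[0,1]$ and $\tau>0$, so this step is not a real obstacle. As a corollary, choosing $\tau \ge (1-\epsilon)^{1/K}$ makes every chain in the pruned graph $G'$ satisfy the constraint of Eq.~\eqref{eq:rbsp_final}.
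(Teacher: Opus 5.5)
Your proof is correct and follows the paper's argument. You bound $\prod_{p\in\pi} r_p(t)$ below by $\tau^K$ factor by factor, where your induction and nonnegativity check spell out what the paper does in one line, and then conclude $\mathrm{Risk}(\pi;t)=1-\mathrm{Rel}(\pi;t)\le 1-\tau^K$. One caveat about your closing corollary: realized chain lengths in $G'$ vary, so covering every chain in the pruned graph needs $\tau \ge (1-\epsilon)^{1/K_{\max}}$, using that $\tau^x$ is decreasing in $x$ for $\tau\le 1$, as in the paper's design guarantee, rather than $(1-\epsilon)^{1/K}$ for a single fixed $K$.
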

\emph{Proof.}
From Eq.~\eqref{eq:rel_chain}, $\mathrm{Rel}(\pi) = \prod_{p \in \pi} r_p \ge \prod_{p \in \pi} \tau
 = \tau^{|\pi|} = \tau^K$.
Thus, $\mathrm{Risk}(\pi) = 1 - \mathrm{Rel}(\pi) \le 1 - \tau^K$. \qed

The realized chain length $K$ is not known prior to routing because peers may host different numbers of model layers. However, $K$ is bounded by system design.

\textbf{Design Guarantee (Trust-floor configuration)}.
Let $K_{\max}$ denote the maximum feasible inference chain length, bounded by the model depth $L$ and the minimum number of layers per peer $l_{\min}$. If the trust floor is configured as
\[
\tau = (1 - \epsilon)^{1/K_{\max}}
\]
then any execution chain $\pi$ selected from the pruned graph satisfies $\prod_{p\in\pi} r_p \ge 1-\epsilon$
with respect to the current trust estimates. The proof is provided in Appendix~\ref{appendix:trust}.

This pruning step restricts the routing topology to the \textit{trusted subgraph} $\mathcal{G}'=(\mathcal{V}', \mathcal{E}')$, where $\mathcal{V}' = \{p \in \mathcal{V} \mid a_p(t)=1 \land r_p(t) \ge \tau\}$.
Consequently, the complex constrained optimization problem reduces to a standard shortest-path search over $\mathcal{G}'$.

\subsection{Phase 3: Feedback-Driven Execution}
Algorithm~\ref{alg:gtrac_opt} details the seeker-side procedure executed during this phase.

The Seeker first applies Phase~2 trust-floor pruning by excluding peers with zero liveness ($a_p = 0$) or insufficient trust ($r_p < \tau$) (line~1). 

From the remaining trusted peers, G-TRAC constructs a directed acyclic graph (DAG)
in which edges correspond to feasible handovers between contiguous model layers (line~2).
The Seeker then runs Dijkstra’s algorithm over the resulting subgraph (line~3) to identify the optimal execution chain $\pi^\star$ that minimizes the effective latency defined in Eq.~\eqref{eq:effective_lat}. 

During execution, G-TRAC implements a \textit{Bounded One-Shot Repair} policy to handle the stochastic nature of edge devices (line 9--15).
If execution fails at hop $p_k$, the Seeker does not discard the entire progress.
Instead, it queries the trusted set for a replacement peer $p'_k$ (where capabilities match and $r_{p'} \ge \tau$) and retries the failed step exactly once.

We explicitly bound this repair policy to a single attempt, as unbounded retries would obscure failure attribution and transform risk-bounded routing into probabilistic retry scheduling. This design choice aligns with established fault-tolerance practices in distributed systems (e.g., circuit breakers)~\cite{dean2013tail, beyer2016site},  where bounded retries are preferred to prevent cascading overload and to enable accurate failure detection.
Consequently, the one-shot mechanism does not aim to guarantee eventual success but serves as a \textit{bounded corrective action} that improves robustness against transient failures while preserving the semantics of risk-bounded routing and the integrity of trust learning.


Upon completion, the Seeker reports the execution trace to the Anchor for trust updates. G-TRAC applies targeted failure attribution: on success ($y=1$), all peers $p \in \pi$ receive a reward $\Delta r^{+}$, whereas on failure ($y=0$), only the peer responsible for the failed hop is penalized by $\Delta r^{-}$.




\subsection{Complexity Analysis}
Naively enumerating all feasible service chains induces a search space of size $\mathcal{O}(|\mathcal{P}|^{K})$, where $|\mathcal{P}|$ is the number of peers and $K$ is the chain length. This exponential growth makes exhaustive search intractable for large networks.
G-TRAC design reduces the selection process to polynomial time through two complementary
mechanisms:

\begin{enumerate}
    \item \textbf{Pre-search pruning}: Peers are filtered based on liveness $a_p(t)$
and the trust floor $\tau$. The filtering step runs in $\mathcal{O}(|\mathcal{P}|)$ time and reduces the effective graph size to $|V'| \ll |\mathcal{P}|$ by discarding unreliable
candidates prior to edge construction.

\item \textbf{Polynomial time selection}: The remaining candidates are organized into a DAG, and chain selection is formulated as a shortest-path problem solved using Dijkstra’s algorithm. The resulting complexity is bounded by $\mathcal{O}(|E'| + |V'|\log |V'|)$, where $|V'|$ and $|E'|$ denote the number of vertices and edges in the pruned graph.
\end{enumerate}

As a result, the complexity of G-TRAC depends on the size of the pruned network graph rather
than on the number of feasible service chains, with selection overhead remaining efficient even in dense networks.
The empirical validation of G-TRAC’s low-latency selection is presented in Section~\ref{subsecLoverhead}.


\section{Evaluation Methodology}
\label{sec:ex_setting}
\subsection{Experimental Testbed and Implementation}
We implemented a fully functional G-TRAC prototype using \texttt{PyTorch}\footnote{\url{https://pytorch.org/}}, and the \texttt{transformers}\footnote{\url{https://huggingface.co/docs/transformers/en/index}} library.
The testbed executes \textit{real-world distributed inference} where live tensors are serialized and transmitted between nodes over HTTP. 

\noindent \textbf{Hardware setup.}
To evaluate G-TRAC under realistic heterogeneity in compute capacity, we deploy a distributed testbed comprising a diverse set of edge nodes.
Table~\ref{tab:hardware_setup} summarizes the hardware characteristics of all nodes, including CPU type, core count, memory capacity, and the number of nodes of each type.
Android smartphones execute Linux user-space applications via Termux\footnote{\url{https://termux.dev/en/}}.

\begin{table}[t]
\vspace{0.05in}
\centering
\caption{Experimental testbed with heterogeneous nodes.}
\label{tab:hardware_setup}
\small
\setlength{\tabcolsep}{3pt}
\begin{tabular}{l l c c}
\toprule
\textbf{Type} & \textbf{Processor} & \textbf{Mem} & \textbf{\#} \\
\midrule
\(P\) (server) &
AMD Opteron 6272 &
47~GB & 4 \\

\(R\) (legacy server) &
Intel Xeon E5430 &
15~GB & 4 \\

\(D\) (desktop) &
Intel Core i5-13500 &
31~GB & 1 \\

\(L\) (laptop) &
Intel Core i7-1260P &
15~GB & 1 \\

\(S\) (phone) &
ARMv8 SoC (2×A76 + 6×A55) &
3.6~GB & 2 \\
\bottomrule
\end{tabular}
\end{table}






\noindent \textbf{Network overlay.}
The nodes are physically distributed across a campus environment and connect via a heterogeneous mix of wired Ethernet  and enterprise Wi-Fi (i.e., Eduroam).
Due to pervasive Network Address Translation (NAT) and inbound connectivity restrictions, direct peer-to-peer communication is not always feasible.
We therefore deploy an encrypted WireGuard-based mesh overlay using Tailscale\footnote{\url{https://tailscale.com/}}, which provides a flat IP addressing scheme and reliable bidirectional connectivity across all nodes.
The resulting overlay latency is included in all reported end-to-end and per-token latency measurements, reflecting realistic edge deployment conditions under constrained connectivity.


\noindent \textbf{GenAI workload.}
We use the pre-trained \textit{GPT-2 Large} model (774M parameters, 36 layers)~\cite{radford2019language} and partition it across worker nodes using pipeline parallelism.
Although larger transformer models exist, GPT-2 Large is sufficiently complex to be infeasible for stable monolithic execution on commodity edge devices (e.g., smartphones).

The workload consists of continuous autoregressive inference requests: a smartphone client submits a prompt that triggers token-by-token generation. Each generated token must sequentially pass through the distributed pipeline; consequently, delays or stragglers at any single peer directly impact the end-to-end generation latency.

\noindent \textbf{Peer heterogeneity and topology.}
We partition the model into contiguous shards\footnote{Detailed deployment plans and per-peer
configurations are provided at \url{https://github.com/anonymous-123qh/g-trac/tree/main/deployment}} (e.g., 3, 6, or 9 layers per shard) and deploy them across the physical machines described in Table~\ref{tab:hardware_setup}.

To emulate the scale of dense edge environments, we instantiate multiple virtual replicas per physical host.
These replicas are multiplexed over the same physical model shard to ensure realistic inference execution times, while their network behavior is software-defined to enforce distinct performance--reliability profiles:

\begin{itemize}
\item \textbf{Honey Pot (Risky--Fast):}
Peers offering ultra-low latency (added delay $\approx 1$\,ms) but exhibiting a high failure rate ($p_{\text{fail}} \in [0.20, 0.35]$), intentionally designed to expose the vulnerability of latency-greedy routing algorithms.

\item \textbf{Turtle (Safe--Slow):}
Peers providing near-perfect reliability ($p_{\text{fail}} \approx 0.1\%$) but at the cost of significantly higher network latency ($150\text{--}300$\,ms).

\item \textbf{Golden Peers (Guaranteed--Safe):}
Idealized peers with near-perfect reliability ($p_{\text{fail}} = 0$) and moderate latency ($20\text{--}40$\,ms).
\end{itemize}

In total, the testbed comprises 336 concurrent peers, forming a diverse routing search space that spans all stages of the inference pipeline.

\noindent \textbf{Trust and failure dynamics.}
To stress-test the convergence of the routing logic under dynamic conditions, we introduce controlled stochastic volatility using an additive asymmetric trust model.
Specifically, we emulate abstract peer failure behavior by allowing each peer $i$ to fail independently per request according to a Bernoulli random variable $X_i \sim \mathrm{Bernoulli}(p_{\text{fail}, i})$, where the failure probability $p_{\text{fail}, i}$ is determined by the peer's profile (e.g., high for Honey Pots, low for Turtles).
A failure stalls the request, preventing activation forwarding.

The complete set of system and workload parameters is summarized in Table~\ref{tab:sys-params}.

\begin{table}[t]
\vspace{0.05in}
\centering
\small
\caption{System parameters and configuration.}
\label{tab:sys-params}
\begin{tabular}{l|c|p{3.6cm}}
\toprule
\textbf{Param.} & \textbf{Value} & \textbf{Description} \\
\midrule
Model & GPT2-L & LLM 36 layers \\



$\tau$ & $0.96$ & Trust floor \\

$\beta$ & $0.30$ & Latency EWMA factor \\

$\ell_{init}$ & $250$ ms & Initial latency estimate \\

$\Delta r^{+}$ & $0.03$ & Trust reward \\

$\Delta r^{-}$ & $0.2$ & Trust penalty \\

$T_{hb}$ & $2$ s & Heartbeat interval \\

$T_{ttl}$ & $15$ s & Node timeout (liveness) \\

$T_{\text{timeout}}$ & $25$ s & Request timeout \\

$T_{gossip}$ & $2$ s & Registry sync period \\



\bottomrule
\end{tabular}
\end{table}

\subsection{Baselines}
\label{subsec:baselines}
We compare G-TRAC against four representative routing strategies that span the trade-off space between latency, reliability, and computational complexity:

\begin{itemize}
    
    \item  \textbf{Naive:} Enumerates feasible execution chains using Depth-First Search (DFS) before uniformly sampling a valid path. Unlike myopic random walks, which may terminate in local dead-ends, this approach guarantees the selection of a complete, end-to-end pipeline if one exists. In our practical implementation, we cap this search (e.g., at $1000$ chains) to bound the latency, though for scalability analysis, we evaluate the unbounded version to demonstrate its inherent complexity.

    \item \textbf{Shortest Path (SP):}
    Minimizes cumulative latency $\sum \hat{\ell}_p$ without enforcing trust constraints ( $\tau = 0$).
    SP represents traditional performance-centric routing protocols that optimize for speed but do not account for node reliability or failure risks.

    \item \textbf{Max-Reliability (MR):}
    Maximizes end-to-end reliability ($\prod r_p$) while disregarding latency cost.
    MR represents a conservative, risk-averse strategy that prioritizes fault tolerance over execution speed.

    \item \textbf{Lagrangian Relaxation (LARAC):}
    A standard heuristic for the Constrained Shortest Path problem~\cite{juttner2001lagrange} that iteratively optimizes a composite objective $L(\pi) + \lambda \cdot R(\pi)$.
    LARAC provides a strong theoretical baseline that attempts to balance latency and reliability, though its iterative nature incurs higher computational complexity than single-pass algorithms.
\end{itemize}

\subsection{Key Metrics}
We evaluate the comparative performance of G-TRAC and the baseline strategies using the following metrics:

\begin{itemize}
   \item \textbf{Service Success Rate (SSR):}
    The fraction of inference \emph{requests} that successfully complete and generate the full target token sequence without an unrecoverable failure:
    $
    \text{SSR} = \frac{N_{\text{success}}}{N_{\text{total}}}.
    $
    SSR captures the effective reliability of the routing policy.

   \item \textbf{Per-Token Latency:} The distribution of per-token end-to-end latency measured over tokens from successful requests, capturing typical token emission dynamics and variability during generation.

    \item \textbf{Selection Overhead:} The client-side wall-clock time required to execute the pruning and path selection logic.
\end{itemize}

\section{Result and Discussion}
\label{sec:result}

\subsection{Service Success Rate Analysis}
We evaluate the robustness of each routing strategy using SSR metric under prompt generation tasks with varying target lengths ($L_\text{tok} \in \{10, 20, 50\}$ tokens). For each configuration (algorithm, $L_\text{tok}$), we perform 100 independent prompt-generation requests. To ensure fair comparison across routing strategies, we reset peer trust states between algorithms.
 Figure~\ref{fig:ssr_tokens} reports the resulting SSR with error bars denote 95\% Wilson confidence intervals (CI)~\cite{wilson1927probable}.

\begin{figure}[t]
  \centering
  \includegraphics[width=0.8\columnwidth]{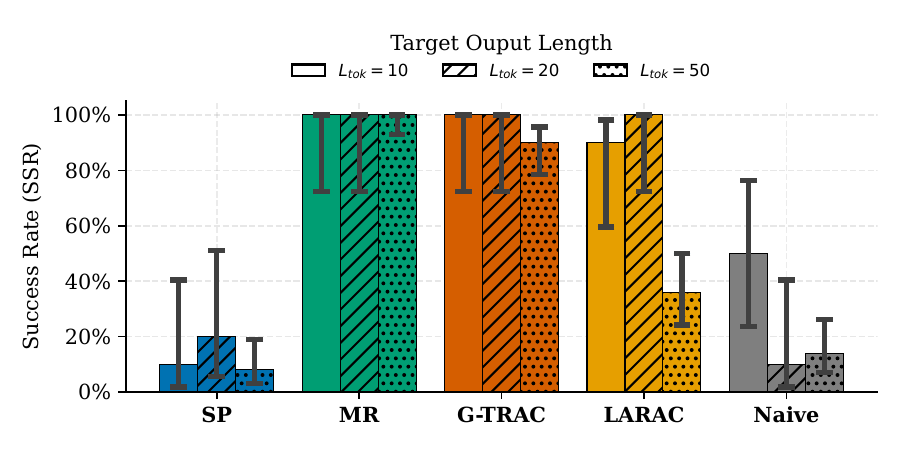}
  \caption{Service success rate  under different generation lengths.}
  \label{fig:ssr_tokens}
\end{figure}



We observe that both SP and Naive perform poorly. SP consistently fails to achieve success rates above 20\%, even for short sequences. This consistent failure is characteristic of the \textit{honey-pot} effect: by aggressively minimizing reported latency, SP preferentially selects fast but unreliable peers, leading to frequent chain breakages and a substantial degradation in SSR. Similarly, while the Naive approach achieves moderate success ($\approx 50\%$) for short contexts ($L_{\text{tok}}=10$), it exhibits high variance and fails to scale. For longer sequences, Naive's performance collapses to $< 15\%$, confirming that simple greedy heuristics lack the foresight required for long-horizon distributed inference.

In contrast, the MR baseline consistently achieves a 100\% SSR across all sequence lengths, validating that prioritizing peers solely on reliability maximizes service completion. G-TRAC demonstrates comparable robustness, maintaining perfect success for shorter sequences and degrading only marginally at $L_{\text{tok}}=50$. The overlap in the Wilson CI for G-TRAC and MR suggests that the two methods offer statistically comparable reliability. These observations highlight the effectiveness of the joint optimization of trust and latency, enabling G-TRAC to systematically avoid the risky honey pot peers that cause SP to fail.

Finally, LARAC exhibits significant sensitivity to task length. While it achieves $100\%$ SSR at $L_{\text{tok}}=20$, its performance significantly degrades to below $40\%$ at $L_{\text{tok}}=50$. The non-overlapping CIs confirm this degradation is statistically significant, suggesting that although LARAC is effective for short sequences, its rigid constraint enforcement struggles with longer, resource-intensive request chains.

\subsection{Per-Token Latency Analysis}
\begin{figure}[t]
  \centering
  \includegraphics[width=0.8\columnwidth]{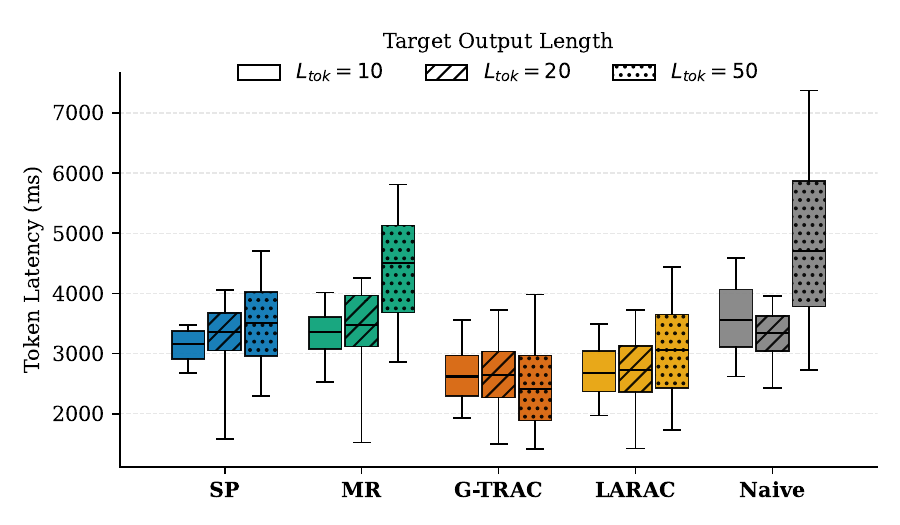}
 \caption{Per-token end-to-end latency distribution under different generation lengths. Black diamonds denote the P99 latency.}
  \label{fig:latency_tokens}
\end{figure}

Figure~\ref{fig:latency_tokens} shows the distribution of per-token end-to-end latency for successful inference requests. Across all routing strategies, per-token latency increases as the target generation length grows from $L_\text{tok}=10$ to $50$, reflecting inherent GPT-2 scaling effects, where longer contexts increase self-attention overhead due to larger KV caches and more expensive matrix multiplications.

Despite explicitly optimizing for minimum path latency, the SP baseline exhibits unexpectedly high per-token latency in completed runs (average latency of 3.1\,s with P99 of 3.6\,s at $L_\text{tok}=10$, increasing to 3.5\,s average and 5.1\,s P99 at $L_\text{tok}=50$). This counterintuitive behavior is explained by \emph{survivorship bias}, i.e., although SP aggressively selects low-latency paths, these paths are often unstable and fail before completion. Consequently, the successful SP executions, especially for longer sequences, are dominated by runs that traverse slower but more stable nodes, inflating the observed per-token latency.

The Naive baseline performs the worst overall, exhibiting the highest per-token latency across all generation lengths (e.g., an average latency of 3.6\,s with P99 of 4.8\,s at $L_\text{tok}=10$, increasing to 4.8\,s average latency and P99 up to 8.0\,s at $L_\text{tok}=50$). By selecting peers uniformly at random, Naive fails to avoid stragglers and high-latency nodes, resulting in consistently poor token-level performance. 

MR achieves perfect SSR, as discussed earlier, but at the cost of significantly increased per-token latency (from an average of 3.3\,s with P99 of 4.1\,s at $L_\text{tok}=10$ to 4.4\,s average and P99 of 6.0\,s at $L_\text{tok}=50$). This confirms that prioritizing highly reliable peers effectively reduces chain failures, albeit by routing through slower nodes that degrade latency performance.

Finally, both G-TRAC and LARAC demonstrate consistently lower per-token latency across all generation lengths. G-TRAC achieves an average per-token latency of 2.6\,s with P99 of 3.8\,s at $L_\text{tok}=10$, remaining stable at 2.4\,s average with P99 of 4.6\,s at $L_\text{tok}=50$. Similarly, LARAC maintains low latency, increasing from an average of 2.7\,s with P99 of 3.8\,s at $L_\text{tok}=10$ to 3.0\,s average and P99 of 5.0\,s at $L_\text{tok}=50$. This advantage stems from their constraint-aware optimization strategies, i.e., both approaches actively prune high-latency stragglers from the candidate pool, preventing slow nodes from dominating the execution chain. 

\begin{figure}[t]
  \centering
  \includegraphics[width=0.6\columnwidth]{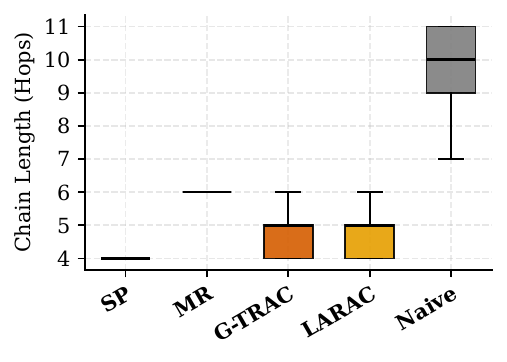}
  \caption{Distribution of inference chain length (hop count).}
  \label{fig:hop_count}
\end{figure}

\subsection{Adaptive Topology: Chain Length Distribution}

Figure~\ref{fig:hop_count} shows the distribution of inference chain lengths across routing strategies. SP exhibits zero variance, consistently selecting a 4-hop chain for all requests. With a model depth of 36 layers and a maximum per-peer capacity of approximately 9 layers, SP minimizes hop count by maximizing shard size per node. While topologically optimal, this rigid strategy concentrates computation on a small number of peers, contributing to the elevated per-token latency observed in Figure~\ref{fig:latency_tokens}.

In contrast, G-TRAC and LARAC center around a median of 4 hops but exhibit a tail extending to 5–6 hops. Both approaches adaptively select minimal-hop paths when sufficiently trusted peers are available, while dynamically extending the chain to avoid unstable nodes when necessary. Consequently, they trade a modest increase in hop count for improved reliability and lower realized latency.

By comparison, MR favors longer chains (6 hops), consistent with its conservative, reliability-first design. Naive produces highly variable and often excessively long chains, reflecting the absence of informed routing heuristics.

\subsection{The Peer Selection Landscape}
\begin{figure}[t]
  \centering
  \includegraphics[width=0.7\columnwidth]{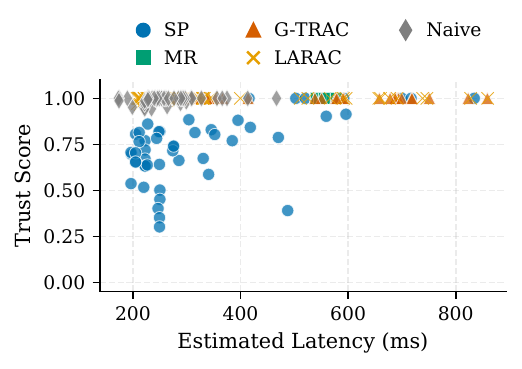}
\caption{Distribution of selected peers by Trust and Latency ($L_{tok}=50$).}
  
  \label{fig:peer_selection_landscape}
\end{figure}
To explain the divergent behaviors of the evaluated strategies, we visualize the \emph{peer selection landscape}, i.e., the trust--latency combinations of the peers chosen by each algorithm. Figure~\ref{fig:peer_selection_landscape} plots  the distribution of selected peers for 50-token generation tasks, with estimated latency on the $x$-axis and trust score on the $y$-axis.

SP concentrates on the low-latency region, including many low-trust peers. Since SP optimizes only for latency, it is frequently attracted to unreliable peers, which is consistent with its reduced success performance.

MR selects predominantly high-trust peers across a broader latency range. By prioritizing trust above all else, MR often avoids medium-latency peers when they incur even a small trust penalty, explaining its strong reliability but higher end-to-end latency.

G-TRAC and LARAC more consistently target the intermediate region that balances trust and latency. In particular, G-TRAC favors peers that retain high trust while avoiding MR's latency overhead. LARAC exhibits a more dispersed pattern, reflecting its iterative Lagrangian updates when enforcing the delay constraint.

\subsection{Routing Decision Overhead at Scale}
\label{subsecLoverhead}

\begin{figure}[t]
  \centering
  \includegraphics[width=0.7\columnwidth]{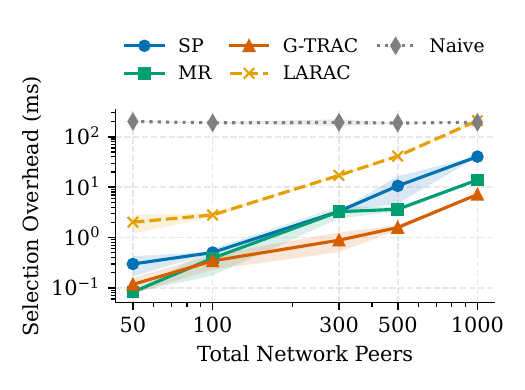}
  \caption{Algorithm scalability: Decision time vs Network size.}
  \label{fig:overhead}
\end{figure}

We evaluate the computational overhead of G-TRAC’s routing decision logic and compare it against representative baselines. To isolate control-plane costs, we decouple path selection from inference execution and implement exact versions of all routing algorithms. Experiments are conducted on a smartphone S, with network size varying from $N \in \{50,\dots,1000\}$ and results averaged over 100 independent trials.

Figure~\ref{fig:overhead} shows the selection overhead trends on a logarithmic scale for all approaches. G-TRAC consistently maintains negligible latency ($< 10$\,ms), even at $N=1000$ nodes. Notably, G-TRAC achieves lower selection overhead than both SP and MR. This is because while SP and MR operate over the full network graph $\mathcal{G}$, G-TRAC first constructs a pruned subgraph $\mathcal{G}' \subset \mathcal{G}$ by excluding nodes whose trust falls below the admissible threshold, substantially reduces the effective search space ($|V'| \ll |V|$, $|E'| \ll |E|$), allowing Dijkstra’s algorithm to converge faster.

In contrast, LARAC exhibits poor scaling (approximately $210$\,ms at $N=1000$) driven by iterative relaxation, while Naive becomes infeasible in dense networks ($>2$\,s timeout) due to DFS combinatorial explosion.

Overall, the results empirically validate that G-TRAC introduces minimal control-plane overhead while improving both reliability and scalability, enabling fully decentralized GenAI routing on resource-constrained edge devices.


\section{Conclusion}
In this paper, we presented G-TRAC, a coordination framework that elevates distributed genAI inference routing from a best-effort transport problem to a risk-bounded control problem. By integrating a hybrid trust architecture with a pruned shortest-path algorithm, G-TRAC decouples global state estimation from the critical routing path, enabling resource-constrained devices to make sub-millisecond routing decisions without centralized bottlenecks.

Empirical evaluation on a heterogeneous edge testbed demonstrates that G-TRAC consistently operates in a favorable trade-off region between trust and latency. In particular, it mitigates the honey-pot effect that degrades latency-driven baselines, achieving a near-perfect success rate for standard edge interaction workloads while preserving low end-to-end latency. 

Future work will extend G-TRAC to dynamic mixture-of-agents workflows, in which functional sub-tasks (e.g., retrieval, reasoning, and verification) are orchestrated across a mesh of untrusted, specialized peers, and will investigate the establishment of a principled root of trust to safely and efficiently harness the largely untapped computational capacity of the network edge.

\appendix
\subsection{Proof of the Design Guarantee}
\label{appendix:trust}

\begin{proof}
From Lemma~\ref{lem:trust}, any service chain $\pi$ of length $K$ composed of peers
satisfying $r_p(t) \ge \tau$ has reliability
$
\prod_{p \in \pi} r_p(t) \ge \tau^K 
$.

Since $0 < \tau \le 1$, the function $\tau^x$ is monotonically decreasing in $x$.
By system design, the realized chain length satisfies $K \le K_{\max}$, where
\[
K_{\max} = \left\lceil \frac{L}{l_{\min}} \right\rceil 
\]
Therefore,
\[
\prod_{p \in \pi} r_p(t) \ge \tau^{K_{\max}} 
\]
Substituting $\tau = (1-\epsilon)^{1/K_{\max}}$ yields
\[
\prod_{p \in \pi} r_p(t) \ge (1-\epsilon)^{K_{\max}/K_{\max}} = 1-\epsilon 
\]
Hence, any execution chain selected from the pruned graph satisfies the global
reliability constraint with respect to the current trust estimates.
\end{proof}
\subsection{Distributed GenAI Feasibility at the Edge}

\begin{figure}[t]
\vspace{0.05in}
    \centering
    \subfloat[Per token latency]{%
        \includegraphics[width=0.23\textwidth]{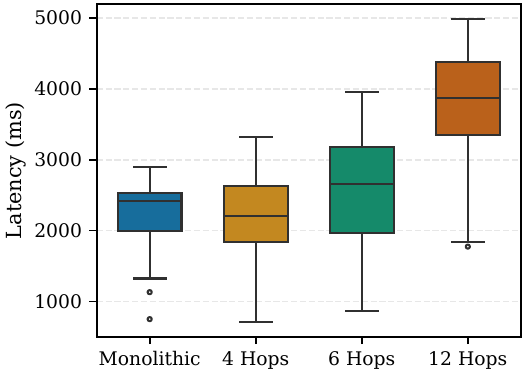}
        \label{fig:inference}
    }\hfill
    \subfloat[Per token CPU time]{%
        \includegraphics[width=0.23\textwidth]{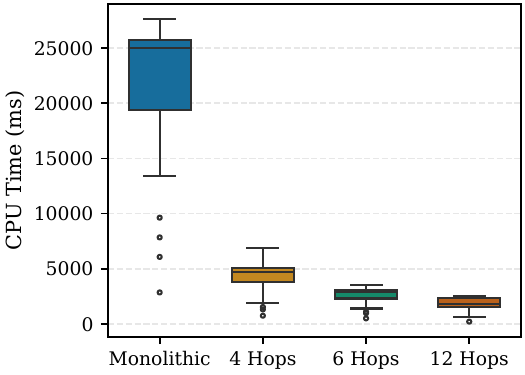}
        \label{fig:cputime}
    }
   \caption{Per-token time analysis: (a) end-to-end inference latency per token and (b) CPU execution time per token.}

    \label{fig:time}
\end{figure}

\begin{figure}[t]
    \centering
    \subfloat[CPU utilization]{%
        \includegraphics[width=0.23\textwidth]{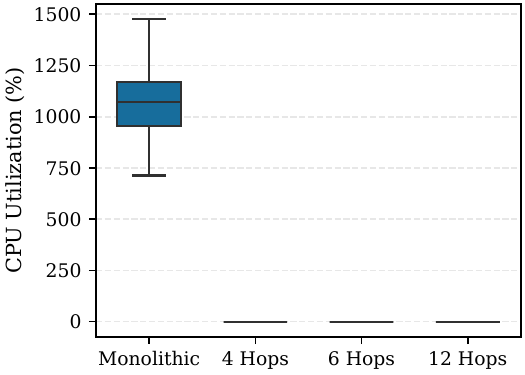}
        \label{fig:cpu}
    }\hfill
    \subfloat[Memory footprint]{%
        \includegraphics[width=0.23\textwidth]{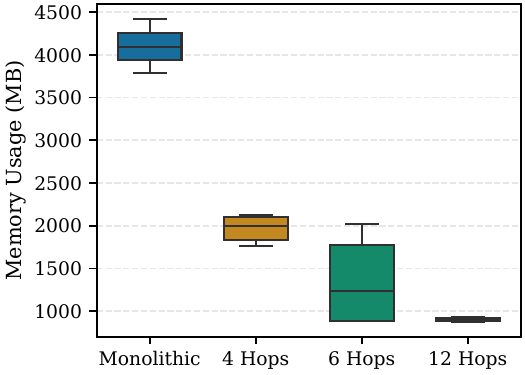}
        \label{fig:mem}
    }
    \caption{Resource consumption for monolithic vs. distributed configurations.}
    \label{fig:resource}
\end{figure}

We empirically characterize the feasibility and overheads of distributed GenAI model inference on heterogeneous edge devices.

To this end, we compare monolithic inference of GPT-2 Large, executed entirely on a single server-class node (\(P\)), against distributed configurations in which the model is partitioned into shards of 3, 6, and 9 layers. These partition sizes correspond to inference chains of 12, 6, and 4 hops, respectively.

Because some newer nodes in the testbed may offer stronger single-node compute performance than \(P\), this comparison is intended to characterize feasibility and deployment overheads in a heterogeneous setting rather than to establish an optimal single-node performance baseline.

Figure~\ref{fig:inference} presents the inference time per token for monolithic and distributed execution across varying hop lengths.
Under the observed network conditions, the latency overhead is highly dependent on chain length. For shorter chains (4 hops), distributed latency (approximately 2.2\,s) remains comparable to the monolithic baseline (approximately 2.3\,s). However, as the chain length increases to 12 hops, cumulative overheads from serialization and network transmission result in a latency increase of approximately $\mathbf{1.7\times}$ (approximately 3.8\,s). 

We further examine per-token CPU time to characterize compute demand. As presented in Figure~\ref{fig:cputime}, monolithic GPT-2 Large inference requires approximately 22\,s of CPU time per generated token on a single device, severely limiting sustained inference and co-location with other workloads. In contrast, distributed execution reduces the per-device CPU time to 1.8 - 4.4\,s per token, depending on shard size. This massive reduction in per-device compute demand enables inference to operate within realistic edge compute budgets, allowing devices to participate in inference without stalling other local background processes.

Finally, we analyze the resource consumption for monolithic and distributed configurations. Figure~\ref{fig:cpu} illustrates the CPU utilization. Monolithic inference nearly saturates available compute resources, averaging 953\% utilization (i.e., spanning multiple cores) with peaks exceeding 1400\%. In contrast, the distributed configuration maintains a minimal CPU footprint on individual peers, with average utilization remaining well below 10\% across all shard sizes, reflecting the sparse, intermittent nature of processing in the distributed chain.

Figure~\ref{fig:mem} depicts the peak Resident Set Size (RSS). Monolithic execution consistently demands substantial memory, with a mean footprint of 3.88\,GB and P95 peaks reaching 4.38\,GB. This renders stable deployment impractical on standard edge hardware (e.g., devices with 4GB unified memory limits). Distributed execution substantially lowers this barrier: peers hosting 9 layers require approximately 1.85\,GB, while 3-layer peers consume approximately 900\,MB. This reduction enables deployment on resource-constrained nodes and directly mitigates the memory capacity footprint challenges identified for AI agent inference~\cite{zhao2026heterogeneous}.

In summary, distributed GenAI makes large-model inference feasible in a heterogeneous edge environment by alleviating per-node memory and compute bottlenecks, at the cost of a bounded increase in end-to-end latency that grows with hop count.

\bibliographystyle{IEEEtran}
\bibliography{acmart}

\end{document}